\newtheorem{thm}{Theorem}
\newtheorem{proposition}[thm]{Proposition}
\newtheorem{corollary}[thm]{Corollary}
\newtheorem{assumption}{Assumption}
\newtheorem{definition}{Definition}
\newtheorem{rmk}{Remark}
\begin{document}

\title{From Parametric Model-based Optimization\\to robust PID Gain Scheduling}
\author[rvt]{Minh H.T. Nguyen \corref{cor1}}
\ead{tuanminh@nus.edu.sg}

\author[rvt]{K.K. Tan}
\ead{kktan@nus.edu.sg}

\cortext[cor1]{Corresponding author}
\address[rvt]{National University of Singapore, Department of Electrical and Computer Engineering,\\3 Engineering Drive 3, Singapore 117576}
%
%

\begin{abstract}
In chemical process applications, model predictive control (MPC) effectively deals with input and state constraints during transient operations. However, industrial PID controllers directly manipulates the actuators, so they play the key role in small perturbation robustness. This paper considers the problem of augmenting the commonplace PID with the constraint handling and optimization functionalities of MPC. First, we review the MPC framework, which employs a linear feedback gain in its unconstrained region. This linear gain can be any preexisting multi-loop PID design, or based on the two stabilizing PI/PID designs for multivariable systems proposed in the paper. The resulting controller is a feedforward PID mapping, a straightforward form without the need of tuning PID to fit an optimal input. The parametrized solution of MPC under constraints further leverages a familiar PID gain scheduling structure. Steady state robustness is achieved along with the PID design so that additional robustness analysis is avoided.

\end{abstract}

\begin{keyword}
Robust tracking \sep constrained linear systems \sep model predictive control \sep PID gain scheduling.
\end{keyword}
\maketitle

\section{Introduction}\label{intro}

Multilevel control attracts intensive research as a systematic tool for control of real plants with respect to high-level target while adhering to the local constraints \citep{Tat08Advanced}. The upper levels are usually concerned with plant-wide steady state objectives with low rate sampling. The lower levels address fast dynamic control. There is a mature trend of applying advanced optimization packages to fill the gap between these two layers. Well-known industrial examples such as AspenOne and RHMPC use MPC as the core optimizer to deal with constraints \citep{Qin03survey,Fro06Model}. MPC is a constraint-handling optimization method where the core idea is based on the receding horizon control. At each sampling time, the current plant output/state is measured, and an optimal input is derived to minimize a performance index subject to state and input constraints. This desired inputs are sent to PID controllers to directly manipulate the actuators. These PIDs must be tuned to minimize the mismatch with the updated optimal input at each sampling step. The first objective of the paper aims to bypass this two-phase complication through direct optimization of the PID gains.

Currently, there are two approaches of MPC, using either \emph{online} implementation \citep{May00Constrained} for slow processes or \emph{offline} implementation \citep{Bem02explicit} for fast processes. The former control approach solves in real time an optimization problem, thus it is more flexible to system design changes. The latter approach solves the same problem offline for all feasible states, and obtains the optimal control law in real time by searching the current state over feasible regions. This scheme, named parametric MPC, can effectively facilitate a PID gain scheduling implementation. The resulting PID controller will deal with constraints by changing gains upon the transition of active constraint regions, not at each time step. This is the second and main objective: to develop a practical implementation of parametric MPC.

The PID realization of MPC can be achieved with its robustness property intact. In fact, a great number of research methods have carefully addressed the robustness of MPC for perturbations both along the trajectory (robust performance) and at steady state (robust stability). Polytopic uncertainty model is discussed in \citet{Gri03} with LMI and in \citet{Bem03Min,Nag04Open} where min-max solutions are formed; bounded disturbances addressed by tube-based MPC is proposed in \citet{Alv08,Mar11Stochastic}. The tradeoff lies in the complexity of the solutions. In this note, we are keen on observing the robust stability provided by the simple PID form of the proposed solution.

In the literature, many finite-horizon optimal PID designs for constrained multivariable systems have been attempted to deliver a systematic PID tuning. In \citet{Mor03}, the velocity form of PID prohibited the variable gain structure, thus a fixed PID gain must be used across the prediction horizon. As shown in \citet{Camacho03Model,Aro08,Sat12} the GPC-based PID results apply to the plants approximated by a first or second order model, thus limiting their applications to multivariable plants. The solution in \citet{Di10Model} partially solves the problem, but the two controllers MPC and PID must operate in parallel. A flexible framework for optimal PIDs is still under ongoing research.

Collectively through the two mentioned objectives, this paper seeks to improve the MPC-based PID scheme to further close the gap between MPC optimization and PID controllers. In Section II, we formulate the tracking problem and analyze the controllability and observability of the augmented system. In Section III, we describes the MPC formula and shows that either a new or existing multi-loop unconstrained PID designs can be adopted into the framework. For convenience, two methods are provided to calculate the PI/PID gains at the operating point so that the closed loop system is stable. The first method applies LQR on the PI state while the latter leads to linear matrix inequalities (LMI) with the size proportional to the number of tracked outputs. Section IV applies this PID design on the piecewise affine (PWA) solution of MPC, which suggests a distributed PID gain scheduling framework to deal with constraints. Fig. \ref{fig1} shows the involved levels within the plantwide structure.

\subsection*{Notation}
The operators $\sum, \Delta$ are the integral and differential terms. The notation $Q \succ 0$ denotes positive definiteness. $x$, $\hat{x}$ and $\tilde{x}$ denote the state, estimated state and state error; $u$ and $\tilde{u}$ denote the inputs for tracking and regulating problems, respectively. Subscript $i$ indicates matrix/vector component and $k$ the prediction step, superscript $i$ is the critical region index. $I_m$ is an identity matrix of order $m$. 

\begin{figure}[t]
\centering
\includegraphics[width=3.2in]{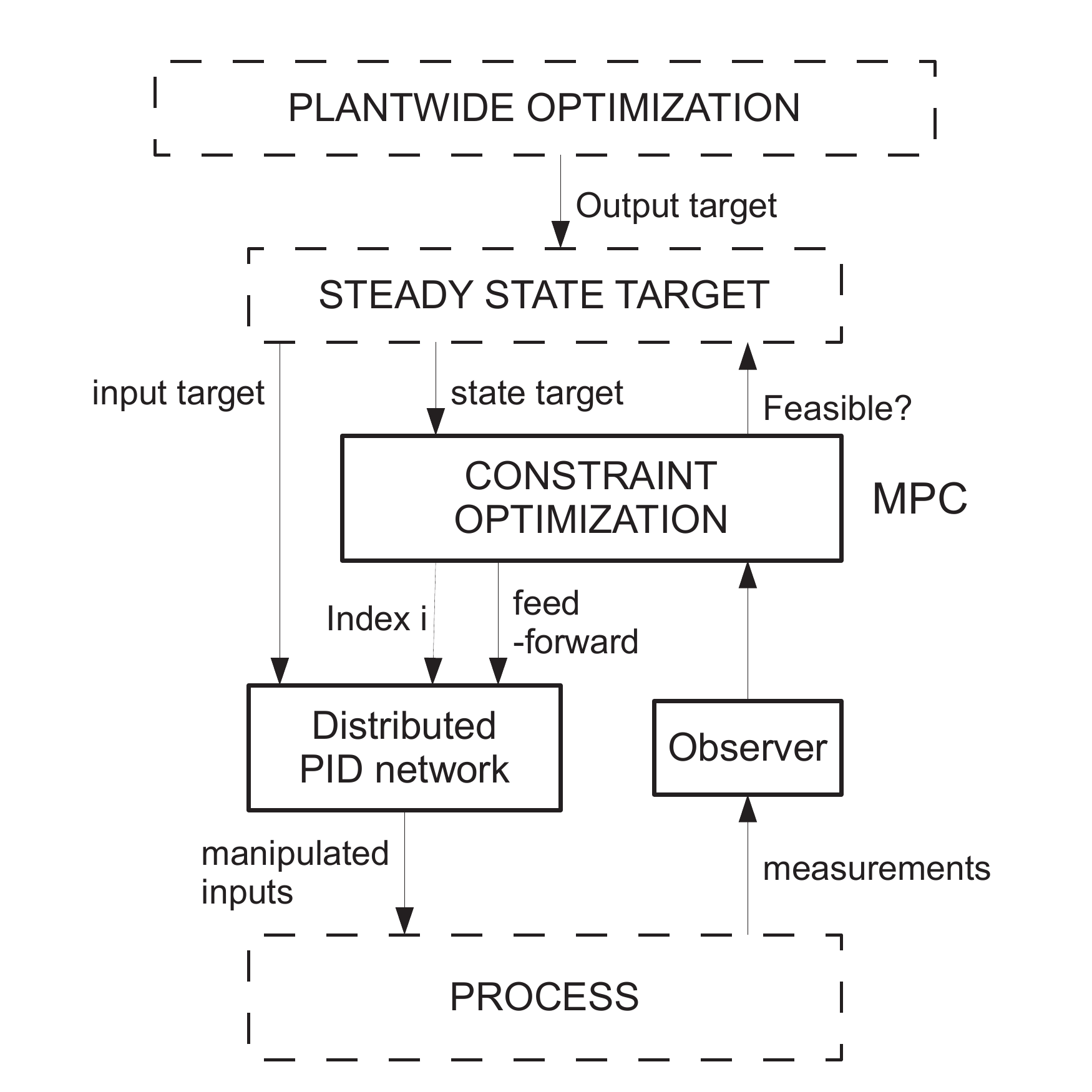}
\caption{Optimization control with multi-layers.}
\label{fig1}
\end{figure}
\section{Preliminaries}\label{preli}
To obtain a linear feedback involving proportional-integral-differential gains, it is necessary to form a system state that contains the corresponding variables. Provided that is the case, an optimal linear feedback gain is also an optimal PID gain. This section introduces the augmented PI/PID-state systems and covers the analysis of their controllability and observability. 

\subsection{Plant Model}
Consider a linear time-invariant system 
\begin{IEEEeqnarray}{rCl}
x(k+1)&=& Ax(k) + Bu(k)\nonumber \\
v(k)&=& C_vx(k)\nonumber \\
y(k) &=& Cx(k).\label{mdl}
\end{IEEEeqnarray}
subject to the constraint
\begin{equation}
Ex(k) + Fu(k) \leq G.\label{cons}
\end{equation}
In \eqref{mdl}, $x(k)\in \mathbb{R}^{n}, u(k)\in \mathbb{R}^{m}, v(k)\in \mathbb{R}^{q}\, (q\leq n)$,  and $y(k)\in \mathbb{R}^{p}$ are the state, input, tracked output and measured output. Assume $(A,B)$ is controllable and $(A,C)$ is observable; $C,C_v$ having full row rank; $E, F, G$ are appropriate matrices defining the state and input constraints. 

The plant model \eqref{mdl} is augmented with an integral of the tracked output $\sum{v(k)}$ to ensure zero offset during the steady state. The following PI-state model is used
\begin{IEEEeqnarray}{rCl}
\begin{bmatrix}x(k+1)\\ \sum{v(k+1)}\end{bmatrix}&=&
\begin{bmatrix}A& 0\\C_v& I_q\end{bmatrix}
\begin{bmatrix}x(k)\\ \sum{v(k)}\end{bmatrix}+
\begin{bmatrix}B\\ 0\end{bmatrix}u(k)\nonumber \\
y(k) &=& Cx(k).\label{augmdl}
\end{IEEEeqnarray}
In special cases, $C_v=I$ requires a full-state tracking while $C_v=C$ expects only output tracking.

\begin{proposition}
The PI-augmented system \eqref{augmdl} is detectable. Furthermore, it is controllable if and only if (A,B) is controllable and
\begin{equation}
rank \begin{bmatrix}A-I_n& B\\ C_v& 0\end{bmatrix}=n+q
\label{hauCtrl}
\end{equation}
\end{proposition}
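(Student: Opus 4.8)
The plan is to prove both assertions with the Hautus (PBH) eigenvector tests, exploiting the block-lower-triangular structure of the augmented state matrix $\tilde{A}=\begin{bmatrix}A&0\\C_v&I_q\end{bmatrix}$, whose spectrum is $\sigma(A)\cup\{1\}$, the integrator contributing the eigenvalue $1$ with multiplicity $q$. In every test the block $(1-\lambda)I_q$ is invertible except at $\lambda=1$, so throughout I would split the analysis into the generic case $\lambda\neq 1$ and the single critical case $\lambda=1$.

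For controllability, I would apply the PBH test to $(\tilde{A},\tilde{B})$ with $\tilde{B}=\begin{bmatrix}B\\0\end{bmatrix}$, asking whether a left vector $[\xi^{T}\ \eta^{T}]$ can annihilate $[\tilde{A}-\lambda I\ \ \tilde{B}]$. This yields $\xi^{T}(A-\lambda I)+\eta^{T}C_v=0$, $(1-\lambda)\eta^{T}=0$, and $\xi^{T}B=0$. For $\lambda\neq 1$ the middle equation forces $\eta=0$ and the test collapses to $\xi^{T}[A-\lambda I\ \ B]=0$, which admits only the trivial solution precisely because $(A,B)$ is controllable. For $\lambda=1$ the remaining equations are exactly the left null-space conditions of the matrix in \eqref{hauCtrl}, so a PBH rank drop at $\lambda=1$ is equivalent to that matrix failing to attain full row rank $n+q$. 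Assembling the two cases gives the \emph{if} direction at once. For \emph{only if}, I would additionally use that $C_v$ has full row rank: if $\xi^{T}[A-I\ \ B]=0$ with $\xi\neq0$, then $\eta^{T}C_v=0$ forces $\eta=0$, producing a nontrivial left null vector of \eqref{hauCtrl}; hence \eqref{hauCtrl} already subsumes the $\lambda=1$ instance of the $(A,B)$ PBH test, and the two stated conditions are exactly equivalent to controllability of the augmented pair.

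For detectability I would run the dual Hautus test, requiring $[\tilde{A}-\lambda I;\ \tilde{C}]$ to have full column rank for all $|\lambda|\geq 1$. A right null vector $[\zeta;\omega]$ satisfies $(A-\lambda I)\zeta=0$, $C_v\zeta+(1-\lambda)\omega=0$, and $C\zeta=0$. For $\lambda\neq 1$, observability of $(A,C)$ forces $\zeta=0$ and then $(1-\lambda)\omega=0$ forces $\omega=0$. The only candidate mode on the unit circle is $\lambda=1$, where the block $(1-\lambda)I_q$ degenerates; there $\zeta=0$ still follows from observability of $(A,C)$, while the integrator component $\omega$ is resolved by observing the integral state itself, which is the controller's own (hence directly available) state. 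Thus no mode with $|\lambda|\geq1$ is unobservable and detectability follows.

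The main obstacle in both parts is the integrator eigenvalue $\lambda=1$. For controllability it is exactly where the generic test specializes to \eqref{hauCtrl}, and the delicate step is showing, through full row rank of $C_v$, that \eqref{hauCtrl} is simultaneously necessary and sufficient at $\lambda=1$; I also note that \eqref{hauCtrl} tacitly requires $m\geq q$. For detectability it is the marginally stable mode that must be accounted for via the directly available integral state, since it cannot be recovered from $y=Cx$ alone.
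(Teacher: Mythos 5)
Your proposal is correct and follows essentially the same route as the paper: the Hautus (PBH) test applied to the block-triangular augmented pair, with $\lambda=1$ as the only critical value, at which the controllability test specializes exactly to the rank condition \eqref{hauCtrl} and the observability test leaves only the integrator mode, handled by the same informal appeal to the integral state being available to the controller. Your write-up is merely more explicit than the paper's (notably on why \eqref{hauCtrl} subsumes the $\lambda=1$ PBH test for $(A,B)$, where the nontrivial left null vector is simply $[\xi^{T}\ 0]$ and full row rank of $C_v$ is not actually needed).
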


\begin{proof}
The Hautus condition for observability is
\begin{equation}
rank\begin{bmatrix}A^T-\lambda I_n& C_v^T& C^T\\ 0& I_q-\lambda I_q& 0 \end{bmatrix}=n+q\text{\quad for all }\lambda\in \mathbb{C}.
\label{hauObsv}
\end{equation}

The condition \eqref{hauObsv} does not hold only at $\lambda=(1,0)$, but the unobservable integrating state can be controlled to decay to a constant so the system is detectable.

Similarly, \eqref{hauCtrl} follows directly from Hautus controllability where only the case of $\lambda=(1,0)$ is to check.
\end{proof}

By addition of the differential term, the PID-state system presents as
\begin{IEEEeqnarray}{rCl}
\begin{bmatrix}x(k+1)\\ \sum{v(k+1)}\\ \Delta v(k+1)\end{bmatrix}&=&
\begin{bmatrix}A& 0& 0\\ C_v& I_q& 0\\ C_v(A-I_n)& 0& 0\end{bmatrix}
\begin{bmatrix}x(k)\\ \sum{v(k)}\\ \Delta v(k)\end{bmatrix}+\begin{bmatrix}B\\ 0\\ C_vB\end{bmatrix}u(k)\nonumber \\
y(k) &=& Cx(k).\label{augmdl2}
\end{IEEEeqnarray}
This PID-augmented system is detectable and stabilizable. The proof is similar to Proposition \ref{hauCtrl}.

\begin{rmk}The number of tracked variables is presumed less than or equal to the number of manipulated variables ($q\leq m$ for PI case and $q\leq m/2$ for PID case); the other case was well treated in \citet{Mae09Linear}.
\end{rmk}

The objective is to design a finite-horizon optimal control based on the augmented system \eqref{augmdl} or \eqref{augmdl2} so that $v(k)$ tracks a piece-wise constant reference.

\subsection{Observer Design}
From the system detectability, an observer can make use of the system \eqref{mdl} to estimate the current state, and simply calculate the integral and differential state through a sum of the estimated $\hat{v}(k)=C_v\hat{x}(k)$ and its difference.

Since $(A,C)$ is observable, the observer is designed as
\begin{IEEEeqnarray}{rCl}
\hat{x}(k)&=&A\hat{x}(k-1)+Bu(k-1)\nonumber\\
&&+L_x[-y(k-1)+C\hat{x}(k-1)]\nonumber\\
\sum{\hat{v}(k)}&=& \sum{\hat{v}(k-1)}+C_v\hat{x}(k-1)\nonumber\\
&&+C_vL_x[-y(k-1)+C\hat{x}(k-1)]\nonumber\\
\Delta \hat{v}(k) &=& C_v(\hat{x}(k)-\hat{x}(k-1))
\end{IEEEeqnarray}
It is only necessary to design the observer gain $L_x$ as $eig (A+L_xC)<1$ so that $\hat{x}(k)-x(k)\rightarrow 0$. This automatically leads to $\Delta \hat{v}(k)$ being stable. The integral estimation error is not required to decay to zero, but a steady state because $\sum{\hat{v}(k)}-\sum{v(k)}\rightarrow const$ means $\hat{v}(k)-v(k)\rightarrow 0$.

\section{Controller Design}\label{MPCPID}
\subsection{MPC tracking structure}\label{MPC}
This section will outline the general MPC controller design for a state space model that results in PI/PID control implementation fulfilling the constraints.

Consider the linear system with constraints $z(k+1)=A_mz(k)+B_mu(k)$. Define the operating points $(z_s, u_s)$ and the deviation variables
\begin{IEEEeqnarray}{rCl}
\tilde{z}(k)&=&z_s-z(k)\nonumber \\
\tilde{u}(k)&=&u_s-u(k),
\label{dev}
\end{IEEEeqnarray}
\begin{IEEEeqnarray}{rCl}
\text{then\qquad}\tilde{z}(k+1)&=&A_m\tilde{z}(k)+B_m\tilde{u}(k)\nonumber\\
\tilde{y}(k) &=& C_m\tilde{z}(k).
\label{devmdl}
\end{IEEEeqnarray}

The finite-horizon quadratic optimal control problem is posed as 
\begin{IEEEeqnarray}{lCl}
V_N^o(\tilde{z}_0,\tilde{U}&&)=\underset{\tilde{U}}{\operatorname{min.}}\,\tilde{z}_N^TP\tilde{z}_N\nonumber\\
&&\qquad+\sum_{k=0}^{N-1}({\tilde{z}_k}^TC_m^TQC_m\tilde{z}_k+\tilde{u}_k^TR\tilde{u}_k)\label{MPClaw}\\
subj.\,to \, && \tilde{z}_k\in X, \tilde{u}_k\in U \quad \forall k\in {0,...,N-1},\nonumber\\
&&\tilde{z}_0\in X_0,\,\tilde{z}_N\in X_f,\nonumber \\
&& \tilde{z}_{k+1}= A_m\tilde{z}_k+ B_m\tilde{u}_k,\, \nonumber
\end{IEEEeqnarray}
where $\tilde{U}=\{\tilde{u}_0,...,\tilde{u}_{N-1}\}$. Here $Q\geq 0, R\succ 0$ are the weighting matrices, $(Q^{1/2},A_m)$ is detectable; $P\geq 0$ is the terminal penalty matrix. $X_0, X_f$ are the initial feasible set and the terminal constraint set. Note that $X, U$ are translated constraints from \eqref{cons} through the transformation in \eqref{dev}. By the receding horizon policy, only $\tilde{u}_0$ is applied to the plant.

\begin{assumption}
The state and input constraints are not active for $k\geq N$. Also, $X_f$ contains the origin.
\end{assumption}

The optimizer $\tilde{U}$ stabilizes \eqref{devmdl} if the value function $V_N^o(\tilde{z})$ corresponds to a local Lyapunov function $V_f$ within the terminal set $X_f$. In addition, the decay rate of that Lyapunov function must be larger than the stage cost \citep{May00Constrained}. Under this setup, any admissible $\tilde{z}_0$ is steered to a level set of $V_f$ (and so $X_f$) within N steps, after which convergence and stability of the origin follows. In other words, $z_k$ is stable at $z_s$ for $k\geq N$.

Therefore, given the state and input weighting matrices $Q,R$, one would want to first compute an unconstrained stabilizing feedback $\tilde{u}=K\tilde{z}$ and its Lyapunov function $V(\tilde{z})$ that satisfy
\begin{IEEEeqnarray}{rCl}
V_f(\tilde{z})&=& \tilde{z}^TP\tilde{z}\geq 0,\nonumber\\
\Delta V_f(\tilde{z})&=& \tilde{z}^TA_K^TPA_K\tilde{z}- \tilde{z}^TP\tilde{z}\nonumber \\
&\leq &-\tilde{z}^TQ\tilde{z}-\tilde{z}^TK^TRK\tilde{z},\, \forall \tilde{z}\in X_f,\label{conMPC}
\end{IEEEeqnarray}
where $A_K=A_m+B_mK$. The other ingredients of MPC formula are then determined as follows.
\begin{itemize}
	\item $X_f$ is the maximal positively invariant polyhedron of $\tilde{z}_{k+1}= A_m\tilde{z}_k + B\tilde{u}_k$ with respect to $\tilde{z}_k\in X, \tilde{u}_k\in U$. As commented in \citet{Rawlings09Model}, if $X_f$ is ellipsoidal, the problem is no longer a quadratic program but a convex program but can be solved with available softwares.
	\item $X_0$ is the N-step stabilizable set of the system \eqref{MPClaw} with respect to $X_f$. $N$ is a trade-off value between the complexity of MPC problem and a larger set $X_0$ (i.e. larger initial error $\tilde{z}_0$).
	\item $P$ is chosen as the solution of the equality in \eqref{conMPC}, the unique positive-definite solution of 
a discrete Lyapunov equation once $K$ is known \citep{Gri05Stabilizing}.
\end{itemize}
$X_0, X_f$ can be calculated analytically using the method detailed in \citet{Bla99Set,Ale06}. 

A popular choice for $K$ is obtained from the LQR gain with weighting matrices $Q, R$ \citep{Chm96constrained, Sco98Constrained}. However, in this note, it is left as a general stabilizing gain $K$ that will be computed in the next section.

\subsection{Computation of Stabilizing PI/PID}\label{PIDsec}
This session describes a method to compute an unconstrained feedback gain $K$ that is used to reconstruct the MPC formula \eqref{MPClaw}. It is because this gain would result in PI/PID controllers, as shown in the following theorem. For the general case, let $z=\begin{bmatrix}x^T& \sum{v^T}& \Delta v^T\end{bmatrix}^T$.

\begin{thm}\label{theo1}
A control law $\tilde{u}(k)=K\tilde{z}(k)$ implements PID control on the system state $x(k)$ which ensures robust tracking for $v(k)$.
\end{thm}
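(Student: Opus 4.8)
The plan is to exploit the block structure of the augmented state $z=\begin{bmatrix}x^T& \sum v^T& \Delta v^T\end{bmatrix}^T$ and simply partition the gain so as to expose the three PID terms. First I would write $K=\begin{bmatrix}K_P& K_I& K_D\end{bmatrix}$ conformably with $z$, so that the feedback law reads
\begin{equation*}
\tilde{u}(k)=K_P\,\tilde{x}(k)+K_I\,\widetilde{\textstyle\sum v}(k)+K_D\,\widetilde{\Delta v}(k).
\end{equation*}
The key observation, already built into \eqref{augmdl2}, is that $\sum v$ is the discrete integral of the tracked output $v=C_vx$ while $\Delta v(k)=v(k)-v(k-1)$ is its first difference (the discrete derivative). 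Hence the three blocks $K_P,K_I,K_D$ act respectively on the output, its accumulated integral, and its derivative, which is precisely the PID structure on $x(k)$. Converting back to absolute variables through $\tilde{u}=u_s-u$ and $\tilde{z}=z_s-z$ of \eqref{dev} then expresses $u(k)$ as a constant bias $u_s$ plus proportional, integral, and derivative action on the tracking error $v(k)-v_s$; the observer of the preceding subsection furnishes $\hat{x}$ together with $\widehat{\sum v}$ and $\Delta\hat{v}$, so the full-state proportional block $K_P\tilde{x}$ is realizable in implementation.

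For the robust tracking claim I would argue through the integral (internal-model) action. At any closed-loop equilibrium the integrator state must be stationary, so its update $\sum v(k+1)-\sum v(k)=C_vx(k)$ read in deviation coordinates must vanish, forcing $C_v\tilde{x}=0$, i.e. $v=v_s$. The decisive point is that this conclusion invokes only the presence of an integrator of $v$ in the loop and not the numerical values of $A,B$; therefore, provided the perturbed closed loop remains stable (equivalently, $K$ keeps $A_K$ stable under the realized parameters), zero steady-state offset is preserved under parametric model mismatch. This is the precise sense in which tracking is robust, and it is why no separate steady-state robustness analysis beyond the PID design is required.

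The main obstacle I anticipate is the bookkeeping for the integral coordinate: $\sum v$ grows without bound and possesses no genuine fixed point, so the deviation $\widetilde{\sum v}$ must be treated through its increments and the equilibrium condition rather than a literal steady-state value. A secondary delicate point is linking the full-state proportional block $K_P\tilde{x}$ to a proportional action on the measured output error, which relies on $v=C_vx$ together with the detectability and observer results established earlier. Once these two points are settled, both the PID interpretation of $K$ and the steady-state offset argument reduce to routine substitution.
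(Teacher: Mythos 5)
Your proposal is correct and follows essentially the same route as the paper: partition $K$ conformably with the augmented error $\tilde{z}=\bigl[\tilde{x}^T\ \sum\tilde{v}^T\ \Delta\tilde{v}^T\bigr]^T$ and read off the proportional, integral, and derivative blocks, using $v=C_vx$ to identify which state directions receive full PID action. The only difference is that where the paper simply asserts robustness from the PID structure, you supply the standard internal-model argument (the integrator update must vanish at any stable equilibrium, forcing $C_v\tilde{x}=0$ independently of the plant parameters), which is a welcome sharpening rather than a different approach.
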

\begin{proof}
Because $\tilde{z}$ is the augmented state error, the control law is written as
\begin{IEEEeqnarray}{rCl}
\tilde{u}(k)&=& K\tilde{z}(k)\nonumber \\
&=& K_1\tilde{x}(k) + K_2\sum{\tilde{v}(k)}+ K_3\Delta\tilde{v}(k)\nonumber\\
&=& K_1\tilde{x}(k)+K_2C_v\sum{\tilde{x}(k)}+K_3C_v\Delta\tilde{x}(k).\IEEEeqnarraynumspace
\label{PID}
\end{IEEEeqnarray}
Since $rank(C_v)=q\leq n$, there are $m\times(n-q)$ P controllers and $m\times q$ PID controllers. In particular, PID control is applied to the state variables which influence the tracked output $v(k)$, so they are robust against disturbances.
\end{proof}

Let $(\bar{A}, \bar{B}, \bar{x})$ be the augmented model and state of \eqref{augmdl}.

\subsubsection{PI Controller ($K_3=0$)}
For this case, it is essential to obtain the feedback gain for $z = \bar{x}=\begin{bmatrix}x^T& \sum{v^T}\end{bmatrix}^T$. The PI control can be formulated by applying LQR to the error model of \eqref{augmdl}
to produce a PI control law $u(k)=K_{PI}z(k)$. From here simply take $(A_m,B_m,C_m)=(\bar{A},\bar{B},\bar{C}), K=K_{PI}$ and use \eqref{conMPC} to apply the MPC formula.

\subsubsection{PID Controller}
To get a non-trivial differential gain $K_3$, one can treat the differential term as an output feedback of the system \eqref{augmdl}. Define $\phi=\begin{bmatrix}x^T& \sum{v^T}& \phi_3^T\end{bmatrix}^T$ where $\phi_3 = \Delta v - C_vBu=C_v(Ax+Bu-x)-C_vBu = C_v(A-I_n)x$. Then
\begin{IEEEeqnarray}{rCl}
\bar{x}(k+1)&=& \bar{A}\bar{x}(k)+\bar{B}u(k)\nonumber \\
\phi(k) &=& \bar{C}\bar{x}(k)=\begin{bmatrix}I_n&0\\ 0& I_q\\ C_v(A-I_n)& 0\end{bmatrix}\bar{x}(k).
\label{sofmdl}
\end{IEEEeqnarray}

Design of static output feedback (SOF) $u(k)=F\phi(k)$ for the discrete time system above has been investigated in \citet{Gar03Robust,Bar05Static,Don07Static,He08Output} which use LMI conditions. There exists more outputs than inputs in this case, so we present a simple solution to determine $F$ in Theorem \ref{SOF} \citep{Bar05Static}. In that work, the solution can be extended to the $H_\infty$ design, but the detail is omitted here for simplicity (refer to Remark \ref{rmk3}).



\begin{thm}\label{SOF}
System \eqref{sofmdl} is stabilizable by a static output feedback if there exist a symmetric positive definite matrix $P_0\in\mathbb{R}^{(n+q)\times(n+q)}$ and a positive scalar $\sigma\in\mathbb{R}$ such that
\begin{equation}\label{LMI1}
\bar{A}^TP_0\bar{A}-P_0+\sigma \bar{B}\bar{B}^T\prec 0\\
\end{equation}
is satisfied. Furthermore, the SOF gain $F$ can be obtained by solving
\begin{equation}\label{LMI2}
(\bar{A}+\bar{B}F)^TP_0(\bar{A}+\bar{B}F)-P_0\prec 0.
\end{equation}
\end{thm}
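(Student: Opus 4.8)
The plan is to read \eqref{LMI2} as a Lyapunov certificate for the closed loop and to treat \eqref{LMI1} as the device that makes the search for that certificate convex. With $u=F\phi$ and the output $\bar{C}$ of \eqref{sofmdl}, the closed-loop recursion is $\bar{x}(k+1)=(\bar{A}+\bar{B}F)\bar{x}(k)$, so asymptotic stability is equivalent to the existence of $P_0\succ0$ with $V(\bar{x})=\bar{x}^{T}P_0\bar{x}$ strictly decreasing, i.e.\ to the discrete Lyapunov inequality \eqref{LMI2}. The essential difficulty is that \eqref{LMI2} is bilinear in the pair $(P_0,F)$; the theorem's claim is precisely that feasibility of \eqref{LMI1}, which involves only $P_0$ and the scalar $\sigma$, certifies solvability of \eqref{LMI2}. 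I would therefore organise the proof around eliminating $F$.

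First I would fix $P_0$ and eliminate $F$ from \eqref{LMI2}. Expanding $(\bar{A}+\bar{B}F)^{T}P_0(\bar{A}+\bar{B}F)-P_0$ and completing the square in $F$ --- legitimate because $\bar{B}$ has full column rank and $P_0\succ0$, so $\bar{B}^{T}P_0\bar{B}\succ0$ --- the minimiser is
\[ F=-(\bar{B}^{T}P_0\bar{B})^{-1}\bar{B}^{T}P_0\bar{A}, \]
and \eqref{LMI2} is solvable in $F$ if and only if the residual
\[ \bar{A}^{T}P_0\bar{A}-P_0-\bar{A}^{T}P_0\bar{B}(\bar{B}^{T}P_0\bar{B})^{-1}\bar{B}^{T}P_0\bar{A}\prec0 \]
holds. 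This is a condition on $P_0$ alone, and by the \emph{Projection Lemma} it is equivalent to $\bar{B}_{\perp}^{T}(\bar{A}P_0^{-1}\bar{A}^{T}-P_0^{-1})\bar{B}_{\perp}\prec0$, where $\bar{B}^{T}\bar{B}_{\perp}=0$.

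Next I would convert this null-space condition into the full-space inequality \eqref{LMI1}. Invoking \emph{Finsler's Lemma}, the inequality restricted to $\operatorname{range}(\bar{B})^{\perp}$ holds if and only if there is a scalar multiplier for which the unconstrained inequality of the form \eqref{LMI1} is satisfied (after the standard congruence exchanging $P_0$ with $P_0^{-1}$); the multiplier is exactly the $\sigma$ appearing in the statement, and \eqref{LMI1} is jointly linear in $(P_0,\sigma)$, hence a genuine LMI. Feasibility of \eqref{LMI1} therefore returns a $P_0$ for which \eqref{LMI2} is solvable; with that $P_0$ frozen, \eqref{LMI2} becomes convex in $F$ via a Schur complement, and either the closed-form gain above or any LMI solution makes $V$ a strict Lyapunov function, yielding asymptotic stability. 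Because $\bar{C}$ in \eqref{sofmdl} contains $I_{n+q}$ as a sub-block and so has full column rank, this effective gain is realisable as an honest static output feedback, and placing a nonzero entry on the $\phi_3$ channel delivers the non-trivial differential action sought in Section \ref{PIDsec}.

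The hard part is the elimination--relaxation step rather than the Lyapunov bookkeeping. A naive attempt to feed \eqref{LMI1} directly into \eqref{LMI2} fails, since $\sigma\bar{B}\bar{B}^{T}$ only constrains the residual along $\operatorname{range}(\bar{B})$, whereas stability demands a negative residual in every direction; it is exactly the \emph{Projection}/\emph{Finsler} argument that couples the complementary directions through $P_0^{-1}$ and legitimately introduces the single scalar $\sigma$. I would also keep careful track of strict definiteness and of the full-column-rank hypothesis on $\bar{B}$, which is what guarantees that $\bar{B}^{T}P_0\bar{B}$ is invertible and the completion of squares well posed.
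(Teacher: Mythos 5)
First, a point of reference: the paper itself supplies no proof of Theorem~\ref{SOF} --- it is imported from \citet{Bar05Static} --- so your argument has to be judged against what the two inequalities actually say. Your Lyapunov reading of \eqref{LMI2} and the elimination of $F$ are sound: completing the square (using $\bar{B}^TP_0\bar{B}\succ0$) correctly shows that \eqref{LMI2} is solvable for fixed $P_0$ iff the Riccati residual is negative definite, and the projection lemma correctly converts this into $\bar{B}_{\perp}^{T}(\bar{A}P_0^{-1}\bar{A}^{T}-P_0^{-1})\bar{B}_{\perp}\prec0$. The gap is the final step. Finsler applied to that null-space condition yields the existence of $\sigma$ with
\[
\bar{A}P_0^{-1}\bar{A}^{T}-P_0^{-1}-\sigma\bar{B}\bar{B}^{T}\prec0,
\]
which differs from \eqref{LMI1} in two essential ways: the quadratic form is $\bar{A}Q\bar{A}^{T}$ in $Q=P_0^{-1}$ rather than $\bar{A}^{T}P_0\bar{A}$, and the multiplier term enters with a minus sign rather than a plus sign. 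No congruence ``exchanging $P_0$ with $P_0^{-1}$'' repairs this: congruence by $P_0^{-1}$ sends $\bar{A}^{T}P_0\bar{A}-P_0$ to $P_0^{-1}\bar{A}^{T}P_0\bar{A}P_0^{-1}-P_0^{-1}$, not to $\bar{A}P_0^{-1}\bar{A}^{T}-P_0^{-1}$, and it cannot flip the sign of $\sigma\bar{B}\bar{B}^{T}$ in any case. Moreover the theorem uses the \emph{same} $P_0$ in \eqref{LMI1} and \eqref{LMI2}, so you are not free to trade $P_0$ for its inverse between the two conditions. As written, your proof establishes a correct and useful stabilizability certificate, but not the one in the statement.

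There is a second issue you should have caught. The printed \eqref{LMI1}, with $+\sigma\bar{B}\bar{B}^{T}$ and $\sigma>0$, forces $\bar{A}^{T}P_0\bar{A}-P_0\prec-\sigma\bar{B}\bar{B}^{T}\preceq0$, i.e.\ $\bar{A}$ is already Schur. Under that hypothesis the theorem holds for a trivial reason that bypasses your entire Projection/Finsler apparatus: $F=0$ satisfies \eqref{LMI2}. Worse, for the system the theorem is meant to serve, $\bar{A}$ of \eqref{augmdl} contains the integrator block $I_q$ and has eigenvalues at $1$, so \eqref{LMI1} is infeasible and the sufficient condition is vacuous; your ``naive attempt fails'' remark is therefore backwards for the printed sign. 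Your own derivation, pushed honestly to its conclusion, shows that the workable condition is $\bar{A}Q\bar{A}^{T}-Q-\sigma\bar{B}\bar{B}^{T}\prec0$ with $Q=P_0^{-1}$; asserting that this ``is'' \eqref{LMI1} papers over a sign and a transpose that matter. Your closing observation --- that $\bar{C}$ has full column rank, so any state-feedback gain is realizable as a static output feedback and only the $\bar{B}_{\perp}$ projection condition survives --- is correct and worth keeping.
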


Conditions \eqref{LMI1}, \eqref{LMI2} can be solved as two LMI problems. Once we have found a stabilizing output feedback $u(k)=F\phi(k)$ or equivalently $\tilde{u}(k)=F\tilde{\phi}(k)$, it can be rewritten in the PID form \cite{Zhe02design} as
\begin{IEEEeqnarray}{rCl}
\tilde{u}(k)&=&F_1\tilde{x}^T+F_2\sum{\tilde{v}^T}+F_3\Delta \tilde{v}+F_3C_vB\tilde{u}(k)\\
\text{so\ }\tilde{u}(k)&=&(I_n+F_3C_vB)^{-1}[F_1\tilde{x}(k)^T
+F_2\sum{\tilde{v}(k)}+F_3\Delta \tilde{v}(k)]\nonumber\\
&=&K_{PID}\tilde{z}(k),
\end{IEEEeqnarray}
where $\tilde{z}=\left[\tilde{x}^T\ \sum{\tilde{v}^T}\ \Delta \tilde{v}^T\right]^T$. The invertibility of matrix $(I_n+F_3C_vB)$ is a necessary condition to render $K_{PID}$.

The MPC formula takes $\tilde{z}(k+1) = A_m\tilde{z}(k)+B_m\tilde{u}(k)$,
\begin{IEEEeqnarray}{l}\label{Am}
A_m=\begin{bmatrix}A& 0& 0\\ C_v& I_n& 0\\ C_v(A-I_n)& 0& 0\end{bmatrix}, B_m=\begin{bmatrix}B\\ 0\\C_vB\end{bmatrix}u(k)\IEEEeqnarraynumspace
\end{IEEEeqnarray}
and $K=K_{PID}$ to apply into \eqref{conMPC}.

\begin{rmk} \label{rmk2} Applying LQR directly to PID state for system \eqref{Am} will not result in a PID controller. In fact, since $A_m$ is no longer full rank, the optimal input $\tilde{u}(k)=(R+B_m^TQB_m)^{-1}B_m^TQA_m\tilde{z}(k)$ depends only on the first two components of $\tilde{z}(k)$, so it is not a full PID but a PI gain. However, we realize that increasing the weight on $\Delta v$ of $Q$ does reduce the overshoot and enhance the disturbance response of $v(k)$.
\end{rmk}

\begin{rmk} \label{rmk3} The PID design for multivariable systems used in this paper is not unique. It is possible to use other techniques such as \citet{Dic09parameter,Soy03Fast,Tos09Robust} to derive a robust PID gain before applying it into MPC.
\end{rmk}

\section{From parametric MPC to PID gain scheduling controllers}\label{fastMPC}

The result from Section III holds when it is applied to either an \emph{online} or \emph{offline} MPC formulation. In this section, we particularly use parametric MPC (offline) to demonstrate the PID gain scheduling realization.

\subsection{Parametric MPC}
Observe that the problem \eqref{MPClaw} minimizes a convex value function subject to a convex constraint set. We have the following definition
\begin{definition}[Critical Region]
A critical region is defined as the set of parameters $\tilde{z}$ for which the same set of constraints is active at the optimum $(\tilde{z},\bar{U}^0(\tilde{z}))$.
\end{definition}

In other words, if the constraints in \eqref{MPClaw} is presented as $G\bar{U}\leq S\tilde{z} + W$ and $A$ is an associated set of row index,
\begin{IEEEeqnarray}{l}
CR_A =\{\tilde{z}\in X_0\,|\,G_i\bar{U}^0=S_i\tilde{z}+W_i\text{ for all } i\in A\}\IEEEeqnarraynumspace
\label{eq:}
\end{IEEEeqnarray}

In \citet{Bao02,Ton03algorithm}, it is shown that these critical regions are a finite number of closed, non-overlapped polyhedra and they covers completely $X_0$. Since $\tilde{U}=\{\tilde{u}_0,...,\tilde{u}_{N-1}\}$, the same properties apply for $\tilde{u}^0_0$. Theorem \ref{theo2} states the key result (see \citet{Bem02explicit}).

\begin{thm}[Parametric solution of MPC]
\label{theo2}
The optimal control law $\tilde{u}^0_0=f(\tilde{z}_0),f:X_0\mapsto U$, obtained as a solution of \eqref{MPClaw} is continuous and piecewise affine on the polyhedra
\begin{equation}
f(\tilde{z})=F^i\tilde{z}+g^i\quad if\, \tilde{z}\in CR^i, i=1,...,N^r,
\label{mulctrl}
\end{equation}
where the polyhedral sets $CR^i\triangleq \{H^i\tilde{z}\leq k^i\}, i=1,...,N^r$ are a partition of the feasible set $X_0$.
\end{thm}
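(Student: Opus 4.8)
The plan is to recognize the finite-horizon problem \eqref{MPClaw} as a multiparametric quadratic program (mpQP) in which the initial error $\tilde{z}_{0}=\tilde{z}$ is the parameter and the stacked input sequence $\tilde{U}$ is the decision variable. First I would eliminate the predicted states by recursively substituting $\tilde{z}_{k+1}=A_m\tilde{z}_k+B_m\tilde{u}_k$, so that each $\tilde{z}_k$ becomes an affine function of $\tilde{z}$ and $\tilde{U}$. Inserting these into the quadratic cost and into the constraints $\tilde{z}_k\in X$, $\tilde{u}_k\in U$, $\tilde{z}_N\in X_f$ collapses \eqref{MPClaw} to
\begin{IEEEeqnarray*}{rCl}
\min_{\tilde{U}} &\ & \tfrac{1}{2}\tilde{U}^{T}H\tilde{U}+\tilde{z}^{T}F\tilde{U}\\
\text{subject to} &\ & G\tilde{U}\leq W+S\tilde{z},
\end{IEEEeqnarray*}
with $H\succ 0$ since $R\succ 0$. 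A change of variable $\bar{U}=\tilde{U}+H^{-1}F^{T}\tilde{z}$ removes the bilinear term and leaves a purely parametric right-hand side $G\bar{U}\leq W+S\tilde{z}$ (with $S$ redefined to absorb $GH^{-1}F^{T}$).

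Next I would apply the Karush-Kuhn-Tucker (KKT) conditions. For a fixed guess of the active constraint set $A$, the stationarity relation $H\bar{U}+G_A^{T}\lambda_A=0$ together with the active equalities $G_A\bar{U}=W_A+S_A\tilde{z}$ can be solved in closed form, yielding $\lambda_A(\tilde{z})=-(G_AH^{-1}G_A^{T})^{-1}(W_A+S_A\tilde{z})$ and $\bar{U}(\tilde{z})=-H^{-1}G_A^{T}\lambda_A(\tilde{z})$. Hence $\tilde{U}$, and in particular the applied $\tilde{u}_0$, is an explicit affine function of $\tilde{z}$, which is exactly the form $F^{i}\tilde{z}+g^{i}$ claimed in \eqref{mulctrl}. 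The region over which this active set remains optimal is then carved out by primal feasibility of the omitted constraints, $G_{\bar{A}}\bar{U}(\tilde{z})\leq W_{\bar{A}}+S_{\bar{A}}\tilde{z}$, and by dual feasibility $\lambda_A(\tilde{z})\geq 0$. Both $\bar{U}(\tilde{z})$ and $\lambda_A(\tilde{z})$ are affine in $\tilde{z}$, so each such condition is a system of linear inequalities and the critical region is a polyhedron $CR^{i}=\{H^{i}\tilde{z}\leq k^{i}\}$.

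Finiteness, the partition property, and continuity are handled last. Since there are only finitely many candidate active sets, there are finitely many critical regions $N^{r}$; every feasible $\tilde{z}\in X_0$ admits an optimizer whose active set places it in one of them, so the regions cover $X_0$, and distinct full-dimensional regions have disjoint interiors because the optimal active set is generically unique. The hard part will be continuity: strict convexity ($H\succ 0$) guarantees a unique minimizer for each $\tilde{z}$, and Berge's maximum theorem then upgrades upper semicontinuity of the solution map to genuine continuity, so the affine pieces must agree on the shared facets where two critical regions meet. I would make this matching explicit by observing that on such a boundary one inequality defining $CR^{i}$ becomes active, which is precisely the condition under which the neighbouring active set returns the same $\bar{U}$, forcing the two affine laws to coincide there and establishing the claimed continuous, piecewise-affine partition.
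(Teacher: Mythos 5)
The paper does not actually prove Theorem \ref{theo2}; it imports the result verbatim from \citet{Bem02explicit}, and your argument is precisely the mpQP/KKT proof given in that reference, so in effect you have reproduced the proof the paper relies on: condense \eqref{MPClaw} to a strictly convex multiparametric QP, solve the KKT system for a fixed active set to obtain an affine law valid on a polyhedral critical region, and use uniqueness of the minimizer ($H\succ 0$) together with continuity of the solution map to glue the pieces along shared facets. The only step you gloss over is the invertibility of $G_AH^{-1}G_A^{T}$: your closed-form expression for $\lambda_A(\tilde{z})$ requires the rows of $G_A$ to be linearly independent, and degenerate active sets (LICQ failure, which does occur in condensed MPC because state constraints at different prediction steps can become simultaneously active and dependent) need the separate treatment given in \citet{Bem02explicit}, e.g.\ extracting a linearly independent subset of the active constraints, before the affine form \eqref{mulctrl} is justified on those regions.
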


\subsection*{Tracking for piecewise constant setpoint}
Recall the admissible set $X_0$ the MPC controller can stabilize depends on the linearized model $x(k+1)=f(x(t))|_{x=x_s}$ and control horizon $N$. Tracking of a new setpoint can be done by increasing $N_2$ based on the new model so that a jump in reference $z_{s1}\rightarrow z_{s2}$ is feasible within $N_2$ steps. 

\begin{figure}[!t]
\centering
\includegraphics[width=3.0in]{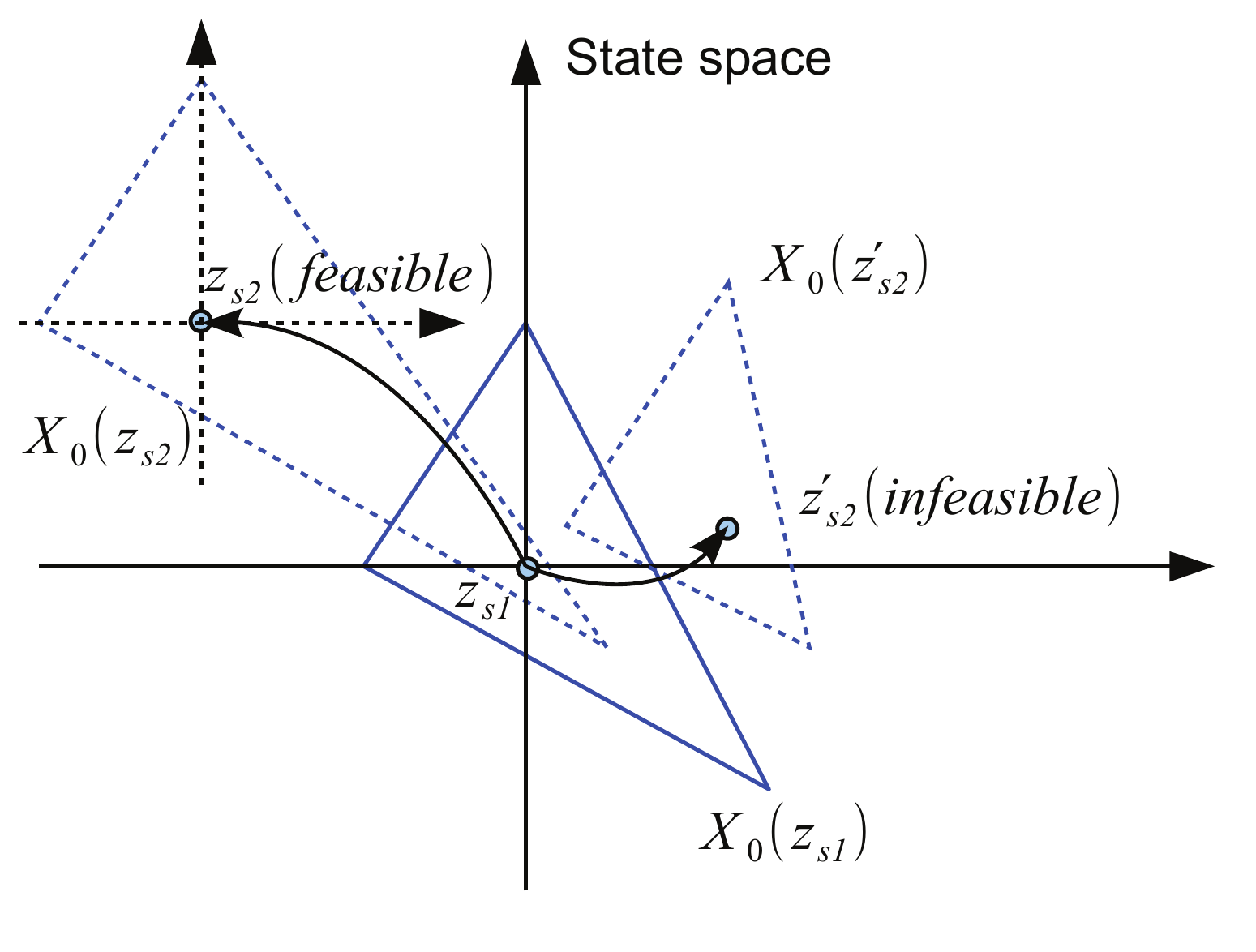}
\caption{Feasibility check of new setpoint $z_{s2}$.}
\label{feas}
\end{figure}

In the case of fixed $N$, Corollary \ref{reffsb} states the necessary and sufficient condition for a new feasible setpoint

\begin{corollary}\label{reffsb}
With a fixed-horizon proposed controller, a change in setpoint $z_{s1}\rightarrow z_{s2}$ is feasible if and only if $z_{s1}-z_{s2}\in X_0(z_{s2})$.
\end{corollary}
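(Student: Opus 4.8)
The plan is to reduce the statement to the defining property of the $N$-step stabilizable set $X_0$, read through the deviation coordinates of \eqref{dev}. Recall that problem \eqref{MPClaw} is posed in the variables $\tilde{z}=z_s-z$, and that $X_0$ is by construction exactly the set of initial deviations $\tilde{z}_0$ for which an admissible input sequence exists respecting the translated constraints $X,U$ over the horizon and driving the state into the terminal set $X_f$ in $N$ steps. Because $X,U$ are obtained by translating \eqref{cons} through the operating point $(z_s,u_s)$, the sets $X_f$ and $X_0$ move with the target; this dependence is what the notation $X_0(z_{s2})$ records. So the first thing I would do is make explicit that ``feasible'' for problem \eqref{MPClaw} is synonymous with ``initial deviation lies in $X_0$.''

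Next I would fix the switching instant. By the stability established in Section \ref{MPC} (the value function serves as a Lyapunov function inside $X_f$), when the controller is locked on $z_{s1}$ the physical state converges to $z=z_{s1}$. At the moment the controller is re-targeted to $z_{s2}$, the new initial deviation in the frame centred at $z_{s2}$ is therefore the setpoint displacement $\tilde{z}_0=z_{s2}-z_{s1}$, i.e. the vector $z_{s1}-z_{s2}$ entering with the sign dictated by the convention $\tilde{z}=z_s-z$. The only content of this step is keeping the physical origin $z_{s1}$ and the deviation frame about $z_{s2}$ consistent.

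The equivalence then follows directly. Saying the change $z_{s1}\to z_{s2}$ is \emph{feasible} means precisely that \eqref{MPClaw}, re-posed about $z_{s2}$ with initial deviation $\tilde{z}_0$, admits a constraint-satisfying solution reaching $X_f(z_{s2})$ within $N$ steps, which by the definition of $X_0$ above holds if and only if $\tilde{z}_0\in X_0(z_{s2})$. For the ``if'' direction, once $\tilde{z}_0\in X_0(z_{s2})$ the stabilizing property of the recomputed controller (Theorem \ref{theo2} together with the terminal ingredients of Section \ref{MPC}) steers $z$ to $z_{s2}$; for ``only if,'' any feasible transition is in particular a feasible instance of \eqref{MPClaw} at the switch, forcing $\tilde{z}_0\in X_0(z_{s2})$. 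Figure \ref{feas} illustrates this membership test.

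The argument is bookkeeping rather than estimation, so I do not expect a hard analytic obstacle; the result is essentially the definition of $X_0$ composed with the coordinate shift at the switching instant. The one point demanding care is that $X_0$ must be \emph{recomputed} about the new operating point rather than reused from $z_{s1}$: the constraint translation, and hence both $X_f$ and $X_0$, depend on $(z_{s2},u_{s2})$. That is exactly the step where an error could creep in, and it is what pins the initial deviation to the setpoint gap $z_{s2}-z_{s1}$ in the corollary's condition.
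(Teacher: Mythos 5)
Your proposal is correct and follows essentially the same route as the paper, which simply observes (with reference to Fig.~\ref{feas}) that feasibility of the setpoint jump is, by definition of the $N$-step stabilizable set recomputed about $z_{s2}$, equivalent to the initial deviation $z_{s1}-z_{s2}$ lying in $X_0(z_{s2})$. Your version is more explicit about the coordinate shift at the switching instant and about both directions of the equivalence (the paper only spells out the ``only if'' direction), but the underlying argument is identical.
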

\begin{proof}
The proof can be inferred from Fig \ref{feas}. If $z_{s1}$ is out of the maximal admissible region $X_0(z_{s2})$ constructed around $z_{s2}$, it is impossible to drive the current error $\tilde{z}=z_{s1}-z_{s2}$ to zero with the existing controller.
\end{proof}

Corollary \ref{reffsb} suggests a way to detect if a new setpoint is feasible so that the local optimization for steady state target can recalculate $z_s$ early before the infeasibility happens. One can use a single model and treat the model mismatch at a different operating point as disturbance, but generally $X_0$ still needs to be rebuilt through \eqref{dev} because the constraints change with setpoint relocation.


\subsection{PID Gain Scheduling Design}
The optimal input of MPC is applied for regions outside $X_f$. When $\tilde{z}(k)$ reaches $X_f$, the system will be stabilized by the pure gain $F^0=K$. Therefore, one practical way to design PID for constrained systems is designing a PID gain for its unconstrained region, which has been accomplished in Section \ref{MPCPID}, and applying these settings on the MPC formulation \eqref{MPClaw}.

\begin{figure}[t]
\centering
\includegraphics[width=3.6in]{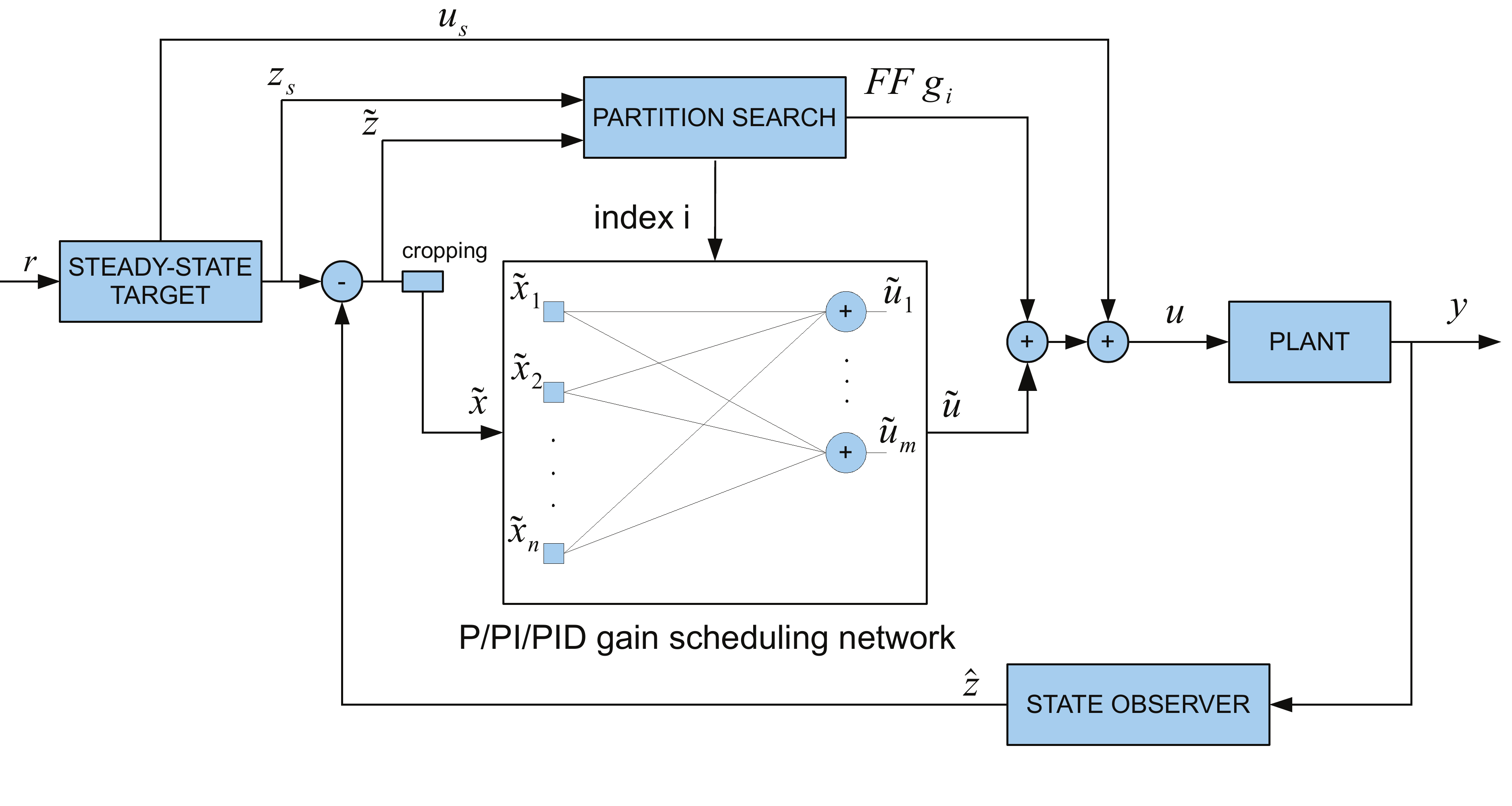}
\caption{Proposed PID gain scheduling structure.}
\label{ctrlDiag3}
\end{figure}

Fig. \ref{ctrlDiag3} shows a series of PIDs plus a single feedforward vector where the controller gains are determined from \eqref{mulctrl}. Each of the PIDs is fully flexible (might contain only P or PI components) and have its own look-up gain scheduling for different partition indexes. At each time step, the proposed scheme would look for the region in which the augmented error $\tilde{z}(k)$ lies in. This search engine would broadcast the region index $i$ to the PID network. The feedforward term associated with region $i$ is added to compensate the active constraints. Non-zero tracking accounts for the addition of steady state input and recovers the input delivered to the plant.

\begin{rmk}
As seen from Fig. \ref{ctrlDiag3}, the PID network consists of one-to-one mappings between each state variable error of the original state $x$ and an input. This fact results from equation \eqref{PID}.
\end{rmk}


\section{Example}


\begin{figure}[t]
\centering
\subfigure{
\label{x1}
\includegraphics[width=\columnwidth]{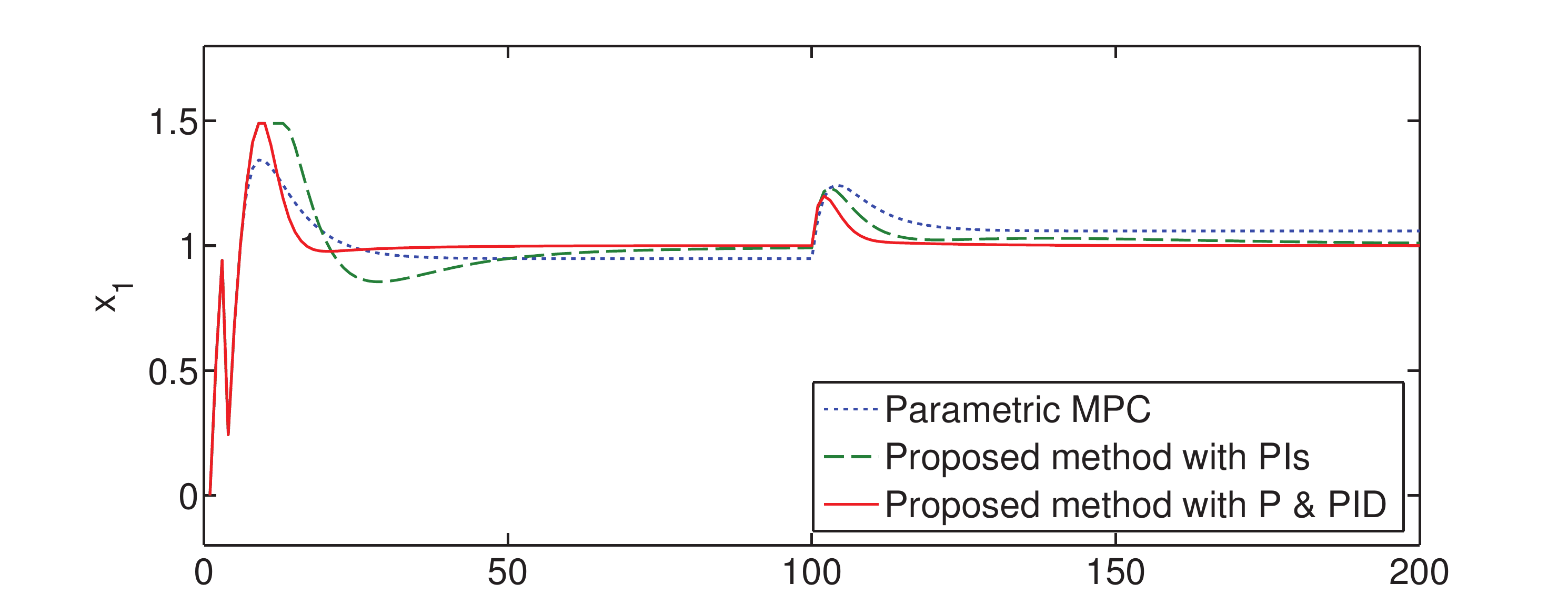}}
\subfigure{
\label{u1}
\includegraphics[width=\columnwidth]{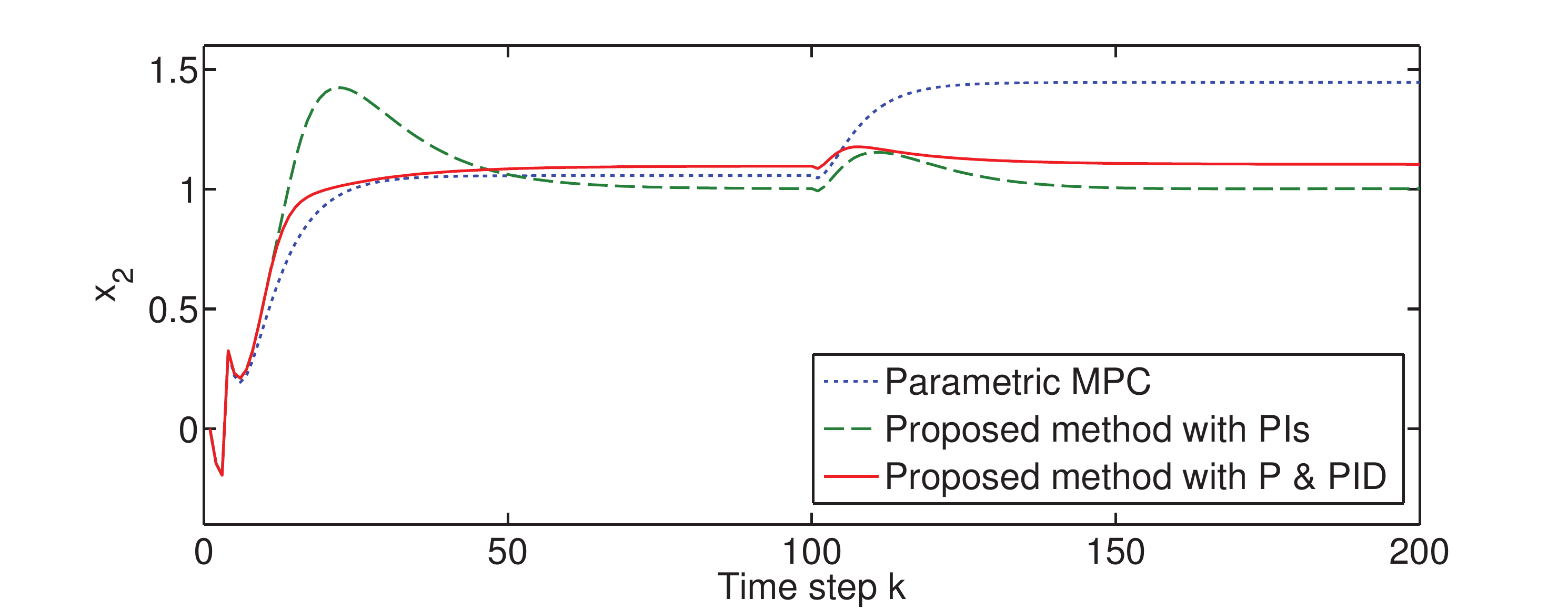}}
\subfigure{
\label{x2}
\includegraphics[width=\columnwidth]{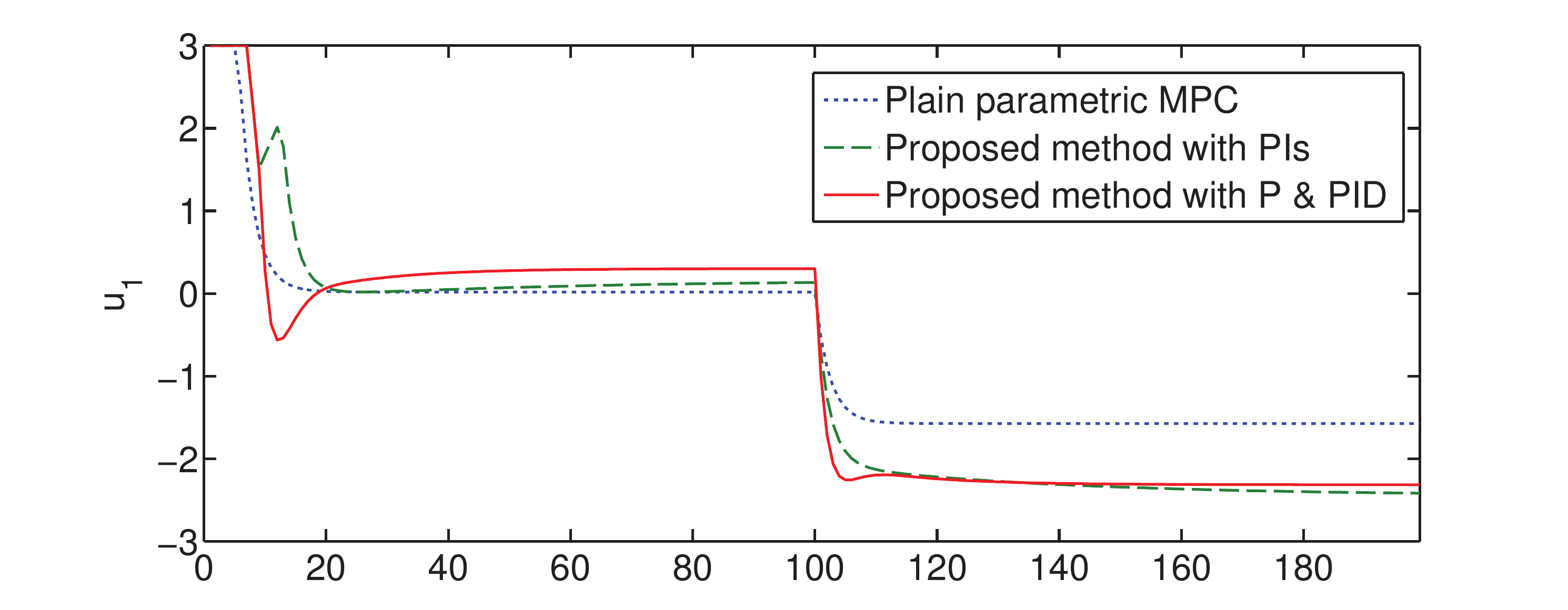}}
\subfigure{
\label{u2}
\includegraphics[width=\columnwidth]{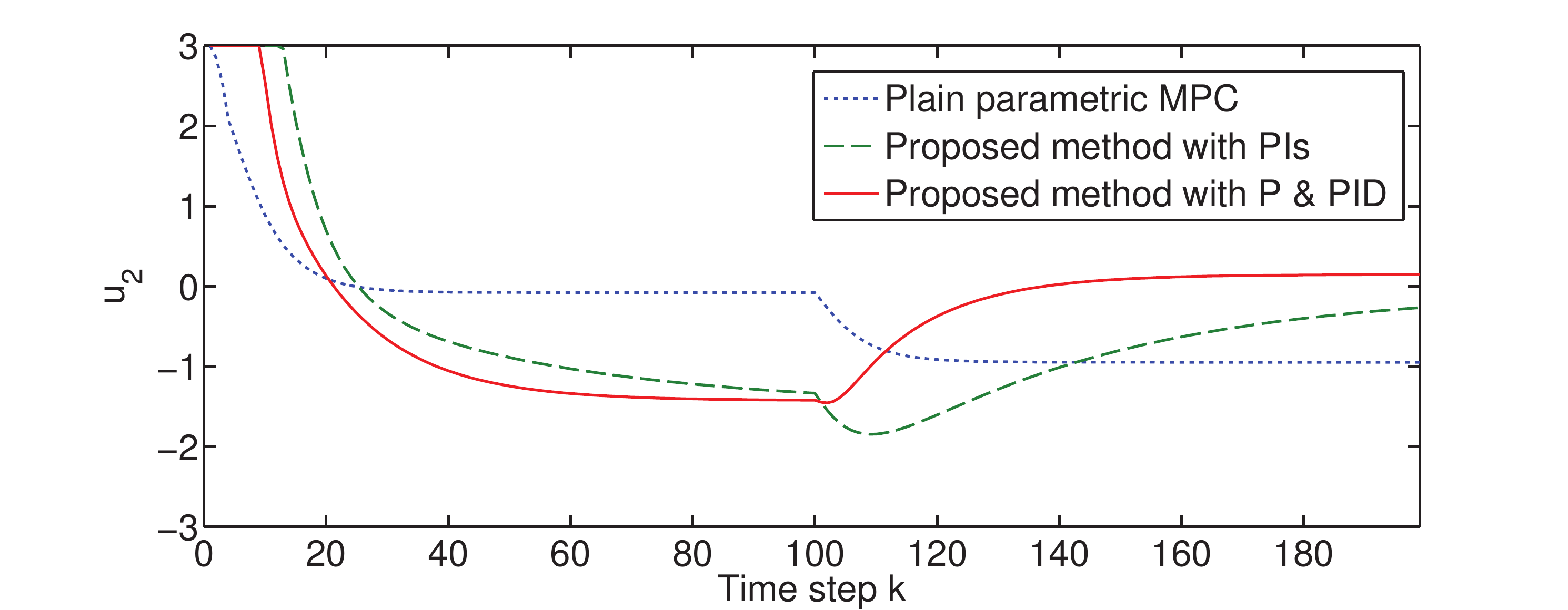}}
\caption{State responses and control inputs under disturbances at transient and steady-state.}
\label{stateinput}
\end{figure}

The proposed control design was illustrated in the following example, generalized from \citet{Bem02explicit} with two inputs. Consider a continuous stirred-tank reactor model
\begin{IEEEeqnarray}{rCl}
A&=&\begin{bmatrix}0.7326& -0.0861\\0.1722& 0.9909\end{bmatrix}, B=\begin{bmatrix}0.0609& 0\\ 0& 0.0064\end{bmatrix},\nonumber\\
C&=&\begin{bmatrix}1& 0\\ 0& 1\end{bmatrix},\, X=\left\{x\in \mathbb{R}^2|\begin{bmatrix}-0.5\\ -0.5\end{bmatrix}\leq x \leq \begin{bmatrix}1.5\\ 2.5\end{bmatrix}\right\},\nonumber\\
U&=&\left\{u\in \mathbb{R} | \begin{bmatrix}-2\\ -2\end{bmatrix}\leq u \leq \begin{bmatrix}2\\ 2\end{bmatrix} \right\}.
\end{IEEEeqnarray}
The task was to track the level 1 $x_1$ with the reference $x_{1s} = 1$. To observe the robustness of tested controllers, the disturbances $d_1=[1;-0.5]$ (impulse), $d_1'=[0.01;-0.01]$ (additive) within an active constrained region at $k=3$ and  $d_2=[-0.15;0]$ (additive) at steady state $k=100$ were introduced.

The three following controllers were compared: simple parametric MPC (I), the whole state tracking with full PI (II) and $x_1$-tracking with partial PID (III). The prediction horizon (also control horizon in this case) is chosen as $N=2$. 

Tuning weighting matrices for PID control had been discussed in \citet{Ngu11Enhanced}. For PI, $z=\left[x_1^T\ x_2^T\ \sum{x_1^T}\ \sum{x_2^T}\right]^T$ and $Q=diag(1,1,0.001,0.001)$, $R=0.01I_2$; for PID $z=\left[x_1^T\ x_2^T\ \sum{x_1^T}\ \Delta x_1^T\right]^T$, $Q=diag(1,1,0.001,0.1)$, $R=0.01I_2$. MATLAB LMI solver was used to obtain the unconstrained PID gain for case III, and Multiparametric toolbox \citep{mpt} was applied to obtain the gains under critical regions.

The unconstrained gain $K$ in the three cases were 
\begin{IEEEeqnarray}{lCl}
K_I &=& \begin{bmatrix}4.501& 3.792\\0.711& 2.160\end{bmatrix},\nonumber\\
K_{II} &=& \begin{bmatrix}5.792& 6.353& 0.289& 0.469 \\1.103& 7.094& -0.579& 0.326\end{bmatrix},\nonumber \\
K_{III} &=&  \begin{bmatrix}0.493& 1.399& 0.139& -0.392\\3.014&  18.545& 1.765& -0.766 \end{bmatrix},\IEEEeqnarraynumspace
\end{IEEEeqnarray}
and they resulted in control laws with $8,\,14,\,12$ critical regions, respectively.

\begin{figure}[!ht]
\centering
\subfigure[]{\label{state1}
\includegraphics[width=\columnwidth,height=2.0in]{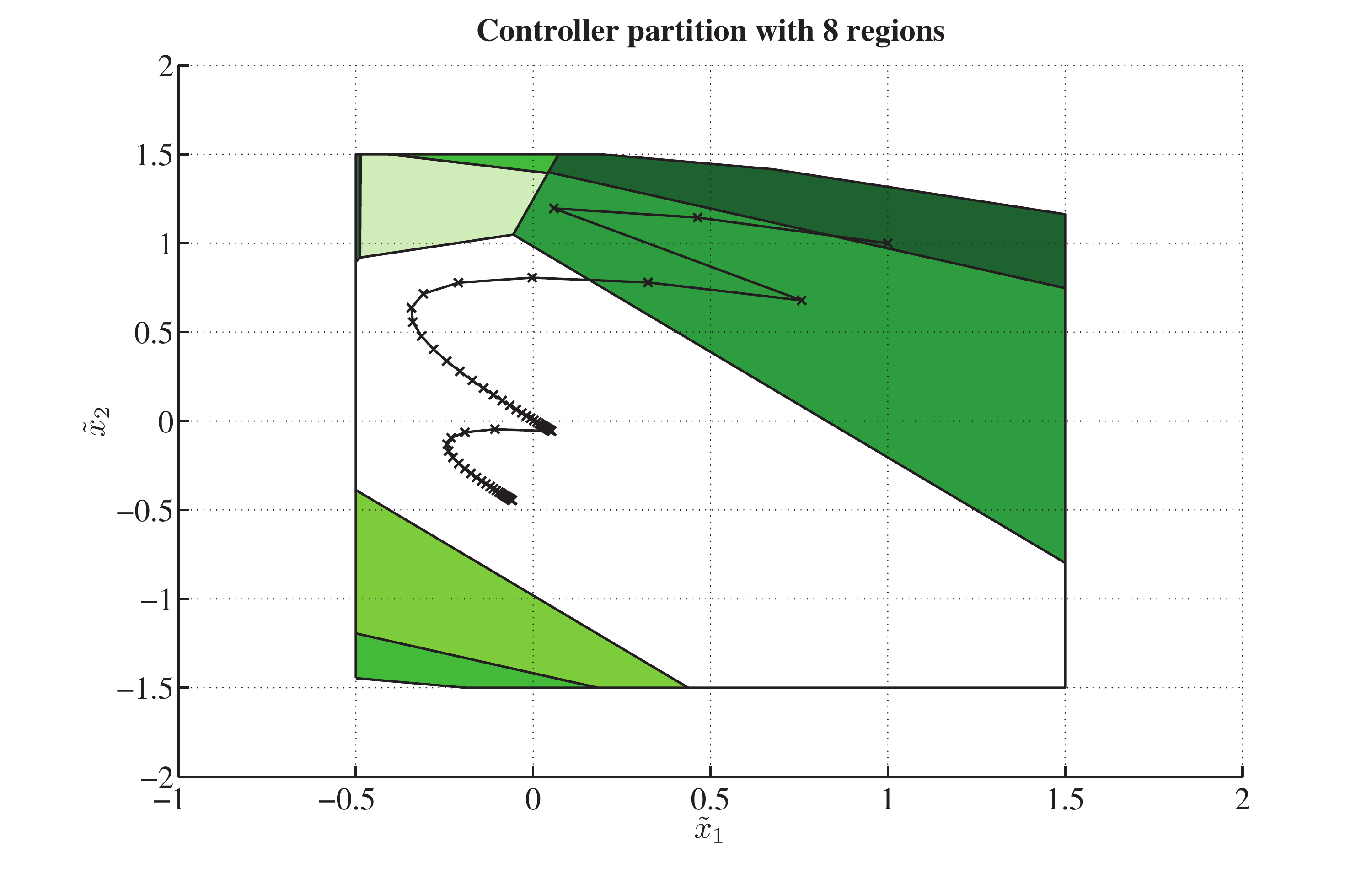}}
\subfigure[]{\label{state2}
\includegraphics[width=\columnwidth,height=2.0in]{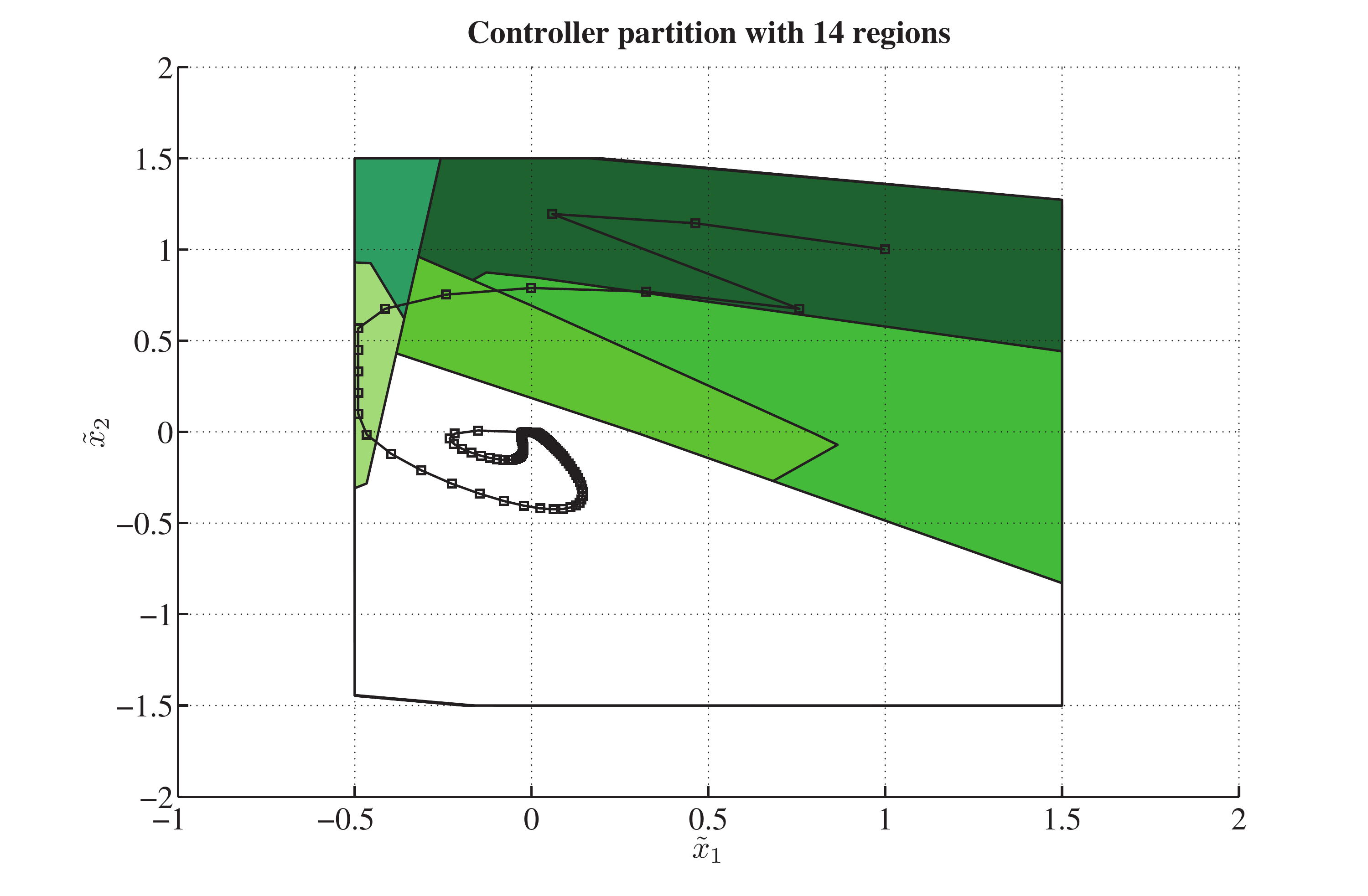}}
\subfigure[]{
\label{state3}
\includegraphics[width=\columnwidth,height=2.0in]{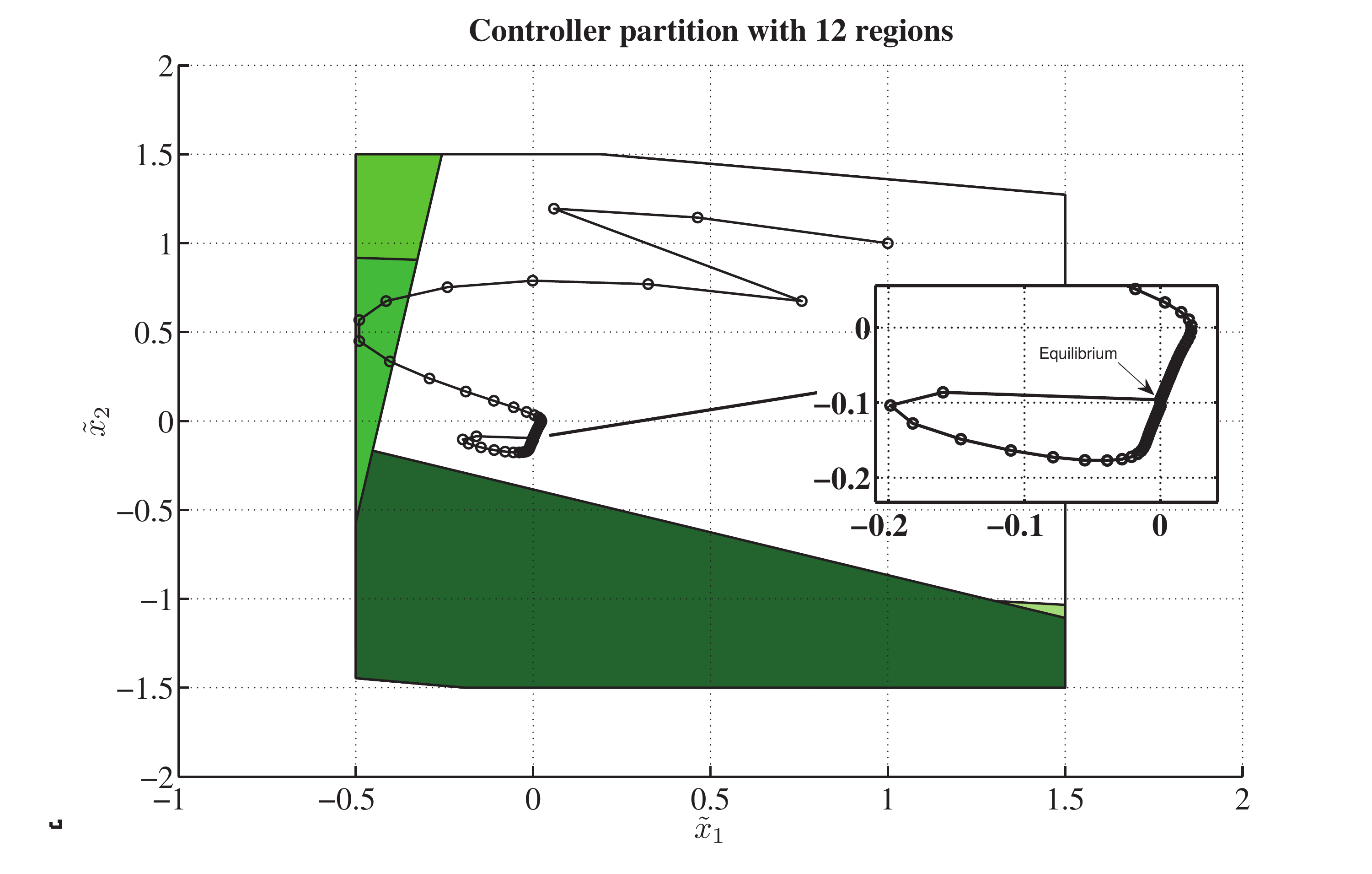}}
\caption{Controller partitions projected on subspace $[\tilde{x}_1,\tilde{x}_2]$ and the state trajectory with (a) Scheme I, (b) Scheme II (cut at $\sum{\tilde{x}_1}=\sum{\tilde{x}_2}=0$) and (c) Scheme III (cut at $\sum{\tilde{x}_1}=\Delta \tilde{x}_1=0$).}
\label{state}
\end{figure}

From the state response in Fig. \ref{stateinput}, we saw that the scheme I could not negate the additive disturbance happened either at an active constraint region or at steady state. It resulted in offset $\tilde{x}=\begin{bmatrix}0.06& -0.06\end{bmatrix}^T$ and $\tilde{x}=\begin{bmatrix}-0.05& -0.45\end{bmatrix}^T$, respectively. The scheme II could track both the state variables but with significant overshoot due to the regulation of $\sum{\tilde{v}}$ back to 0. That effect could be removed by tracking it to a constant (as a tuning parameter), but ignored in this example for simplicity. The scheme III tracked $x_1$ as required, and successfully forced the disturbance effect into $x_2$. The tracking under setpoint change and disturbance rejection also happened faster than scheme II. We stressed that all the three schemes were able to deal with the state and input constraints $x_1\leq 1.5$, $u\leq 3$ during transient stage because of the feedforward term $g^i$ in the parametric MPC law.

Fig. \ref{state} gave another perspective of the result. Provided that the impulse disturbance did not excite the current state out of the feasible region $X_0$, it was feasible to find an optimal input for all the three schemes. Secondly, scheme II hit on the outer constraint $\tilde{x}_1=-0.5\, (x_1=1.5)$ and took a long time to recover. Indeed, an integral windup happened at this upper output bound. The scheme III showed the full PID potential. It is known that the proportional-integral deals with the present and past behavior of the plant. The differential term predicts the plant behavior and can be used to stabilize the plant faster. This was in line with Remark \ref{rmk2}. The trajectory quickly returned to the origin in both cases of setpoint change and additive disturbance. Lastly, while scheme II regulated the state error back to the origin, scheme III only drove it to the axis $\tilde{x}_1=0$ as expected. It meant only $m\times q$ PIDs and $m\times (n-q)$ Ps were needed to track $q$ outputs.

In conclusion, it is observed that as long as the disturbance does not drive the equilibrium outside of the unconstrained region, output tracking using the integral state variables remains feasible. The robust stability during transient stage is inherent through the PID form. The robust stability around setpoint only concerns the PID control design described in Section \ref{PIDsec}, which can be improved further by $H_\infty$ approaches as stated in Remark \ref{rmk3}. Overall, extension to integral and differential terms is the natural to perform tracking control.

\section{Conclusion and Future work}

As it was never emphasized enough, the link between MPC and a robust linear controller at equilibrium is revisited in this paper. We modify the linear controller to be capable of offset-free tracking. The resultant control architecture is a PID gain scheduling network with a feedforward part to deal with state and input constraints. A simple test for setpoint tracking feasibility is also discussed. Finally, the example results show that the robustness stability of the proposed method is inherent within the PI/PID structure when disturbances arrives.


\bibliographystyle{elsarticle-harv}
\bibliography{D:/Dropbox/Research/referencelist}
\end{document}